\newtheorem{theorem}{Theorem}
\newtheorem{lemma}[theorem]{Lemma}
\theoremstyle{plain}
\newtheorem{thm}{Theorem}[section]
\theoremstyle{plain}
\newtheorem{prop}[thm]{Proposition}
\theoremstyle{definition}
\newcommand{\R}{\ensuremath{\Bbb{R}}}
\newcommand{\E}{\ensuremath{\Bbb{E}}}
\def\e{{\text{e}}}
\numberwithin{equation}{section}
\begin{document}

\title[Spread options and volatility modulated Volterra processes]{Pricing and hedging of energy spread options and volatility modulated
Volterra processes}

\date{\today}
\author[Benth]{Fred Espen Benth}
\address[Fred Espen Benth]{\newline
Centre of Mathematics for Applications \newline
Department of Mathematics\newline
University of Oslo\newline
P.O. Box 1053, Blindern\newline
N--0316 Oslo, Norway}
\email[]{fredb\@@math.uio.no}
\author[Zdanowicz]{Hanna Zdanowicz}
\address[Hanna Zdanowicz]{\newline
Department of Mathematics\newline
University of Oslo\newline
P.O. Box 1053, Blindern\newline
N--0316 Oslo, Norway}
\email[]{hannamz\@@math.uio.no}

\thanks{The authors acknowledge financial support from the project "Managing Weather Risk in Electricity Markets (MAWREM)" funded by the Norwegian Research Council.}

\keywords{Spread option; Measure change; L\'evy semistationary process; Volatility modulated Volterra process; Quadratic hedging; Energy markets}

\begin{abstract}
We derive the price of a spread option based on two assets which follow a
bivariate volatility modulated Volterra process dynamics. Such a price dynamics
is particularly relevant in energy markets, modelling for example
the spot price of power and gas. Volatility modulated Volterra processes are
in general not semimartingales, but contain several special cases of interest in 
energy markets like for example continuous-time autoregressive moving average processes. Based on a change of measure, we obtain a
pricing expression based on a univariate Fourier transform of the payoff function
and the characteristic function of the price dynamics. Moreover, the spread
option price can be expressed in terms of the forward prices on the underlying
dynamics assets. We compute a linear
system of equations for the quadratic hedge for the spread option in terms of 
a portfolio of underlying forward contracts.   
\end{abstract}

\maketitle

\section{Introduction}

Spread options are risk management tools that are extensively traded in the energy markets. For example, the owner of a gas-fired power plant lives
from the spread between power and gas prices, and may apply so-called
spark spread options to manage the risk of undesirably low power prices
relative to gas. Tolling agreeements and virtual power plants (VPP) are
other classes of derivatives which are closely linked to spread options, as they can 
be represented as a strip of spread options on the spot prices. 
Although most spread options in energy markets are traded OTC, there exist some
exchange-traded spread options on NYMEX written on the price differential 
between refined oil products.

The spot price dynamics of power and gas are very complex and call for sophisticated stochastic models. The prices possess clear seasonal features,
and the fluctuations over time are typically much more volatile than in conventional
financial markets. Weather factors play a key role in price determination,
and sudden imbalances in supply and/or demand may produce large price spikes.
We refer to Benth, \v{S}altyt\.{e} Benth and Koekebakker~\cite{BSBK-book}, Eydeland and Wolynieck~\cite{EW} and Geman~\cite{G} for extensive 
presentation of energy markets and stochastic modelling of spot prices.

Barndorff-Nielsen, Benth and Veraart~\cite{BNBVspot} argue for stationarity
of deseasonalized spot prices in the German power market EEX. Moreover, they find
that L\'evy semistationary (LSS) processes provide a flexible class of models than can 
be fitted to such spot price series. LSS processes can account for stationarity,
stochastic volatility and spikes in an efficient way suitable for energy markets. 
These processes encompass many of the traditionally used models, like 
for example simple
Gaussian Ornstein-Uhlenbeck processes. Continuous-time autoregressive moving
average processes is a special class of LSS processes that has been used
succesfully to model power prices (see Bernhard, Kl\"uppelberg and
Meyer-Brandis~\cite{BK}). 

In this paper we consider the problem of pricing spread options in energy markets
where the price dynamics of the underlying assets are given as a bivariate
volatility modulated Volterra (VMV) process. VMV processes are generalizations of
LSS processes, and it is worth noticing that VMV processes (and also LSS processes) are not semimartingales in general.

We apply a change of measure technique in order to translate the problem of computing the
price of a call on the spread between two energies to computing the price of a call
on one asset. This is a well-known approach (see Carmona 
and Durrleman~\cite{CD} for bivariate geometric Brownian motions), 
which has been developed for a rather general class of semimartingale 
processes by Eberlein, Papapantoleon and Shiryaev~\cite{EPS2,EPS}.
We extend this method to the case of VMV processes, and combine it with 
Fourier methods in order to express the spread option price as an integral of the Fourier transform of a univariate call payoff function and the L\'evy 
characteristics of the bivariate VMV process (see Carr and Madan~\cite{CM} and Eberlein, Glau and Papapantoleon~\cite{EGP} for a thorough introduction and analysis of Fourier methods in derivatives pricing). We remark that although LSS
processes may be the most relevant case of models for energy markets, the 
extension to VMV processes comes at no mathematical cost in our analysis, which
is why we consider this general class. 

The price of the spread option on energy spots can in our context be represented
in terms of the corresponding forward prices on the spots. We apply this connection
to derive a quadratic hedging strategy for the spread option, that is, the hedge
portfolio in the respective forward contracts that minimizes the quadratic hedging 
error.  

We present our results as follows. In the next Section a bivariate VMV model is introduced for the spot price dynamics. The spread option price is derived 
in Section 3, while we analyse the quadratic hedging problem in Section 4.

\section{A bivariate volatility modulated Volterra process for the spot dynamics}

Let $L=(U,V)$ be a bivariate (two-sided) L\'evy process defined on a complete probability space 
$(\Omega,\mathcal{F},P)$ equipped with the filtration $\{\mathcal{F}_t\}_{t\in(-\infty,\widetilde{T}]}$. Here, 
$\widetilde{T}<\infty$
is some finite time horizon for the energy markets in question. We choose to work with the RCLL version
of $L$, that is, $L$ is right-continuous with left-limits. The cumulant function of $L$ is
defined to be
\begin{equation}
\label{def-cumulant}
\psi(x,y)=\ln\E\left[\exp\left(\mathrm{i}xU(1)+\mathrm{i}yV(1)\right)\right]\,,
\end{equation}
and by the L\'evy-Khintchin representation,
\begin{align}
\psi(x,y)&=\mathrm{i}x\gamma_1+\mathrm{i}y\gamma_2-\frac12\left(c_1^2x^2+2\rho c_ 1c_2 xy+c_2^2y^2\right) \\
&\qquad+\int_{\R^2}\left(\exp(\mathrm{i}xz_1+\mathrm{i}yz_2)-1-(\mathrm{i}xz_1+\mathrm{i}yz_2)\boldsymbol{1}_{|(z_1,z_2)| \leq1}\right)\ell(dz_1,dz_2)\,. \nonumber
\end{align}
Here, $\gamma_1, \gamma_2 \in \mathbb{R}$ are the drift corefficients,  $c_1, c_2\in \mathbb{R}_+$, the variances  associated to the Brownian component of the L\'evy process, $\rho\in(-1,1)$ the correlation coefficient of the Brownian component, and $\ell(dz_1,dz_2)$ is the L\'evy measure of $L$. Let $\psi_U(x)$ and $\psi_V(x)$ denote the
cumulants of the marginals $U$ and $V$, respectively. It holds $\psi_U(x)=\psi(x,0)$ and $\psi_V(x)=\psi(0,x)$.

Introduce the two volatility modulated Volterra (VMV) processes 
\begin{align}
X(t)&=\int_{-\infty}^tg(t,s)\sigma(s-)\,dU(s)\,, \label{X-def} \\
Y(t)&=\int_{-\infty}^th(t,s)\eta(s-)\,dV(s)\,,\label{Y-def}
\end{align}
where $g$ and $h$ are two real-valued measurable functions defined on $(-\infty,\widetilde{T}]^2$.
The stochastic volatility processes $\sigma, \eta$ are 
assumed to be $\mathcal{F}_t$-adapted RCLL processes, both 
being independent of $L$. 
In order for the stochastic integrals in 
\eqref{X-def} and \eqref{Y-def} to make sense, we assume that
\begin{equation}
\E\left[\int_{-\infty}^tg^2(t,s)\sigma^2(s)\,ds\right]<\infty\,,\ \  \E\left[\int_{-\infty}^th^2(t,s)\eta^2(s)\,ds\right]<\infty\,,
\end{equation}
for all $t\leq \widetilde{T}$. 

We suppose that $S_1(t)$ and $S_2(t)$ denote the spot price dynamics of two energies
(power and gas, say), 
defined on a logarithmic scale by
\begin{align}
\ln S_1(t)&=\ln\Lambda_1(t)+X(t)\,, \label{spot1-def}\\
\ln S_2(t)&=\ln\Lambda_2(t)+Y(t)\,. \label{spot2-def}
\end{align}
Here, $\Lambda_i(t)>0$ for $i=1,2$ are deterministic and measurable functions modelling the mean level of the spot prices. 

Note that in the context of pricing derivatives, it is natural to consider 
the spot prices for positive times $t$ only. When studying spread option prices, we 
indeed focus on
$S_i(t)$ for $t\geq 0$, $i=1,2$. Thus,  it is sufficient to specify $\Lambda_i$, $i=1,2$ for
times $t\geq 0$ only.  We note that we may define
$g(t,s)=\widehat{g}(t,s)\mathbf{1}(0\leq s\leq t)$ to restrict the process $X$ to
only positive times $t\geq 0$.
We emphasize that defining the stochastic integration in the 
definition of the VMV processes $X$ and $Y$ to start at $-\infty$ opens for stationary 
stochastic dynamics, which is highly relevant in energy and in more
general commodities markets (see e.g. Benth et al. ~\cite{BKMV} and
Barndorff-Nielsen et al.~\cite{BNBVspot}).
For example, 
if we let $\sigma=\eta=1$ and $g(t,s)=\widetilde{g}(t-s)$, $h(t,s)=\widetilde{h}(t-s)$
for functions $\widetilde{g},\widetilde{h}:\mathbb{R}_+\rightarrow\mathbb{R}$ being
square-integrable, then $X$ and $Y$ are stationary processes because their cumulants are
independent of time $t$. If further we allow for stochastic volatility processes
$\sigma$ and $\eta$ which are stationary, $X$ and $Y$ in \eqref{X-def} and \eqref{Y-def} 
are known as {\it L\'evy semistationary} (LSS) processes.
For example, letting $g(t-s)=\exp(-\alpha(t-s))$ for a constant $\alpha>0$,  
we recover the stationary solution of a L\'evy-driven
Ornstein-Uhlenbeck process. In other words, $X(t)$ is the stationary solution of the stochastic differential equation
$$
dX(t)=-\alpha X(t)\,dt+dU(t)\,.
$$ 
Ornstein-Uhlenbeck processes are frequently used in factor models for energy prices like gas and power (see Benth et al.~\cite{BSBK-book}). In Barndorff-Nielsen, Benth and Veraart~\cite{BNBVspot}, LSS processes have been
proposed for modelling electricity spot prices, and empirically investigated on data from the German EEX market. Another popular
class of models is the continuous time autoregressive moving average (CARMA) processes. These have been applied in several studies to power prices, see Bernhard et al.~\cite{BK} 
and Benth et al.~\cite{BKMV}. A CARMA($p,q$)-process, for $p>q$ being natural
numbers, is defined as follows. Let $\mathbf{b}\in\mathbb{R}^p$ be a vector 
$\mathbf{b}^*=(b_0,b_1,\ldots,b_{q-1},1,0,\ldots,0)$ with 
the first $q$ elements being non-zero, element $q+1$ equal to one and the remaining 
coordinates being zero. Here $\mathbf{b}^*$ is the transpose of $\mathbf{b}$. 
The vector $\mathbf{e}_k\in\mathbb{R}^p$ for a natural number $k\leq p$ is the 
$k$th canonical unit vector in $\mathbb{R}^p$.  
Further, define the matrix $A\in\mathbb{R}^{p\times p}$ to be
$$
A=\begin{bmatrix}
0 & 1 & 0 & \cdots & 0\\
0 & 0 & 1 & \cdots & 0\\
0 & 0 & 0 & \cdots & 1\\
\vdots & \vdots & \vdots & \ddots & \vdots \\
-a_p & -a_{p-1} & -a_{p-2} & \cdots & -a_1
\end{bmatrix},
$$
where $a_i > 0 $ for $i=1, \dots, p$. By choosing
$$
g(t,s)=\mathbf{b}^*\exp(A(t-s))\mathbf{e}_p\,,
$$
we say that $X$ in \eqref{X-def} is a volatility modulated CARMA($p,q$)-process. 
We note that
$X$ can be expressed as $X(t)=\mathbf{b}^*\mathbf{Z}(t)$, where 
$\mathbf{Z}(t)\in\mathbb{R}^p$ is the stationary solution of the Ornstein-Uhlenbeck process
$$
d\mathbf{Z}(t)=A\mathbf{Z}(t)\,dt+\mathbf{e}_p\sigma(t-)\,dU(t)\,.
$$ 
A special class of CARMA processes is the continuous-time autoregressive 
processes, which are obtained by choosing $q=0$ and denoted by CAR($p$). This case
corresponds to selecting $\mathbf{b}=\mathbf{e}_1$.

Typical choices for the stochastic volatility processes $\sigma$ and $\eta$ are provided by
the Barndorff-Nielsen and Shephard (BNS) model. Here, $\sigma^2(t)$ and
$\eta^2(t)$ are defined as the stationary solutions of Ornstein-Uhlenbeck processes 
driven by subordinators, that is, L\'evy processes with only positive jumps and non-negative
drift. This ensures positive variance processes. We refer to Barndorff-Nielsen and
Shephard~\cite{BNS} for a comprehensive analysis of this class of stochastic volatility
models. Note in passing that Benth~\cite{B} applied the BNS model in an 
exponential Ornstein-Uhlenbeck process to model the dynamics of UK gas spot prices.
As a final note on the VMV models $X$ and $Y$ in \eqref{X-def} and \eqref{Y-def}, we 
recover Gaussian processes by simply choosing the $L$ to be a bivariate
Brownian motion (possibly correlated). Further, by letting the volatilities be constant 
and choosing $g$ and $h$ approapriately, we can allow for Gaussian processes including
fractional Brownian motion (see Alos et al.~\cite{AN}).

We suppose that the spot model is defined under the pricing measure directly, that is,
$P$ is assumed to be the pricing measure. From a practical viewpoint one 
would first specify the 
dynamics of the spot under the objective market probability, and then change measure
to incorporate the market price of risk. The market price of risk is modelling the 
risk premium in  the market. We refer to Barndorff-Nielsen et al.~\cite{BNBVspot} for a discussion on
a class of measure changes of Esscher type for LSS processes, that can be easily extended to 
VMV processes. As this class of measures preserves the VMV structure of the 
model, we refrain from introducing it to keep notation at a minimum. 

\section{Pricing spread options on energy spots}

Let us continue with the pricing of spread options based on the bivariate spot price
model in \eqref{spot1-def}-\eqref{spot2-def}. To this end, let 
$0<T\leq\widetilde{T}$ be the exercise time
for a European call option on the spread $S_1(t)-kS_2(t)$ where $k>0$ is the 
{\it heat rate} and strike is zero. Hence, the payoff of 
the option is
$$
\left(S_1(T)-kS_2(T)\right)^+\,,
$$
where we use the notation $(x)^+=\max(x,0)$. The arbitrage-free price at time $t\leq T$ 
of this option will be
\begin{equation}
\label{spread-price}
C(t,T)=\e^{-r(T-t)}\E\left[\left(S_1(T)-kS_2(T)\right)^+\,|\,\mathcal{F}_t\right]\,,
\end{equation}
where $r>0$ is the risk-free interest rate. 

In order to have the expectation in 
\eqref{spread-price} well-defined, we assume that the price processes
$S_1$ and $S_2$ are integrable, that is, that they have finite expectation.
Obviously, because $\max(x,0)\leq |x|$, we find
$$
\E\left[\left(S_1(T)-kS_2(T)\right)^+\right]\leq \E\left[\vert S_1(T)-kS_2(T)\vert\right]
\leq\E\left[S_1(T)\right]+k\E\left[S_2(T)\right]<\infty\,. 
$$ 
But, $S_i(T)$, $i=1,2$ are integrable if $X(T)$ and $Y(T)$ have 
finite exponential moment. To ensure this, we introduce the 
following exponential integrability condition: For  
any $0\leq T\leq \widetilde{T}$,
\begin{equation}
\label{exp-moment}
\E\left[\exp\left(\int_{-\infty}^T\psi_U(-\mathrm{i}g(T,s)\sigma(s))\,ds\right)\right]<\infty\,,
\E\left[\exp\left(\int_{-\infty}^T\psi_V(-\mathrm{i}h(T,s)\eta(s))\,ds\right)\right]<\infty\,.
\end{equation} 
We suppose that \eqref{exp-moment} holds 
from now on.

Our aim next is to derive a numerically
tractable analytic expression for the price $C(t,T)$. We shall conveniently achieve this by
Fourier methods.

Following Folland~\cite{F}, the Fourier transform of a function $g\in L^1(\mathbb{R})$
is defined as
\begin{equation}
\widehat{g}(y)=\int_{\mathbb{R}}g(x)\e^{-\mathrm{i}xy}\,dx\,.
\end{equation}
Introduce the function 
\begin{equation}
f_{c,T}(x):=\e^{-cx}\left(\e^{x}-k\frac{\Lambda_2(T)}{\Lambda_1(T)}\right)^+\,. 
\end{equation}
It is simple to see that $f_{c,T}\in L^1(\R)$ for any $c>0$. Hence, its Fourier transform 
exists, and calculated explicitly in the next Lemma.:
\begin{lemma}
For any $c>1$, the Fourier transform of $f_{c,T}$ is given by
$$
\widehat{f}_{c,T}(y)=\frac{1}{(c+ \mathrm{i} y)(c+\mathrm{i}y-1)}\left(k \frac{\Lambda_2(T)}{\Lambda_1(T)}\right)^{-c- \mathrm{i} y+1}\,.
$$
Moreover, $\widehat{f}_{c,T}\in L^p(\mathbb{R})$ for any $p\geq 1$.
\end{lemma}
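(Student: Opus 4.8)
The plan is to compute the Fourier transform directly from the definition and then estimate its decay.

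Let me set $\kappa := k\,\Lambda_2(T)/\Lambda_1(T) > 0$ for brevity. I'd first observe that $\left(\e^x - \kappa\right)^+$ is supported on $\{x \geq \ln\kappa\}$, so $f_{c,T}(x) = \e^{-cx}(\e^x - \kappa)$ on that half-line and is zero to the left. The Fourier transform is then
\[
\widehat{f}_{c,T}(y) = \int_{\ln\kappa}^{\infty}\e^{-cx}\left(\e^{x} - \kappa\right)\e^{-\mathrm{i}xy}\,dx = \int_{\ln\kappa}^{\infty}\left(\e^{(1-c-\mathrm{i}y)x} - \kappa\,\e^{(-c-\mathrm{i}y)x}\right)\,dx\,.
\]
Both exponentials are integrable at $+\infty$ precisely because $c > 1$ makes $\mathrm{Re}(1-c-\mathrm{i}y) = 1-c < 0$ and $\mathrm{Re}(-c-\mathrm{i}y) = -c < 0$; this is where the hypothesis $c>1$ is used and is the one point requiring care. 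I'd evaluate each elementary integral, obtaining
\[
\widehat{f}_{c,T}(y) = \frac{\kappa^{1-c-\mathrm{i}y}}{c-1+\mathrm{i}y} - \kappa\cdot\frac{\kappa^{-c-\mathrm{i}y}}{c+\mathrm{i}y} = \kappa^{1-c-\mathrm{i}y}\left(\frac{1}{c-1+\mathrm{i}y} - \frac{1}{c+\mathrm{i}y}\right)\,,
\]
where I have used $\kappa\cdot\kappa^{-c-\mathrm{i}y} = \kappa^{1-c-\mathrm{i}y}$.

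Combining the two fractions over the common denominator $(c+\mathrm{i}y)(c-1+\mathrm{i}y)$ leaves a numerator of $1$, which yields exactly
\[
\widehat{f}_{c,T}(y) = \frac{1}{(c+\mathrm{i}y)(c+\mathrm{i}y-1)}\,\kappa^{-c-\mathrm{i}y+1}\,,
\]
matching the claimed formula once $\kappa$ is substituted back.

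For the integrability claim, I'd note that $\left|\kappa^{-c-\mathrm{i}y+1}\right| = \kappa^{1-c}$ is a constant in $y$, so it suffices to control $\left|(c+\mathrm{i}y)(c+\mathrm{i}y-1)\right|^{-1}$. Since $\left|c+\mathrm{i}y\right|^2 = c^2 + y^2$ and $\left|c+\mathrm{i}y-1\right|^2 = (c-1)^2 + y^2$, the modulus of the denominator grows like $y^2$ as $|y|\to\infty$, so $\bigl|\widehat{f}_{c,T}(y)\bigr|$ behaves like $\kappa^{1-c}/y^2$ at infinity. Thus $\bigl|\widehat{f}_{c,T}(y)\bigr|^p$ is dominated by a constant times $|y|^{-2p}$ for large $|y|$, which is integrable for every $p\geq 1$; near the origin the denominator is nonvanishing (its factors have real parts $c>0$ and $c-1>0$) so $\widehat{f}_{c,T}$ is bounded there. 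Hence $\widehat{f}_{c,T}\in L^p(\R)$ for all $p\geq 1$. The only genuine obstacle is the convergence of the defining integral, which is entirely governed by the strict inequality $c>1$; everything else is routine partial-fraction algebra and a power-counting decay estimate.
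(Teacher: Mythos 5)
Your proof is correct and follows essentially the same route as the paper's: a direct evaluation of the two elementary exponential integrals over $[\ln \kappa,\infty)$, recombination of the partial fractions, and a $|y|^{-2}$ decay estimate for the $L^p$ claim. The only difference is that you spell out the role of $c>1$ in the convergence at $+\infty$ and the boundedness near the origin, details the paper's terser proof leaves implicit.
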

\begin{proof}
The derivation follows the same steps as in Carr and Madan~\cite{CM}, but we include it here
for the convenience of the reader. Denote for simplicity $A:=k\frac{\Lambda_2(T)}{\Lambda_1(T)}$. From the definition of the Fourier transform, we find
\begin{align*}
\widehat{f}_{c,T}(y)&=\int_{-\infty}^{\infty}\e^{-cx}\left(\e^{x}-A\right)^+\e^{-\mathrm{i} xy}dx \\
&=\int_{\ln A}^{\infty}\e^{-cx+x-\mathrm{i} xy}dx-A\int_{\ln A}^{\infty}\e^{-cx-\mathrm{i} xy}dx\\
&=\left(\frac{1}{c-1+\mathrm{i} y}-\frac{1}{c+\mathrm{i} y}\right)A^{-c+1-\mathrm{i} y}\,. 
\end{align*}
Moreover, as $|\widehat{f}_{c,T}(y)|^p\sim1/(k+y^2)^p$ for some strictly positive 
constant $k$ and  $p\geq 1$, integrability of $\widehat{f}_{c,T}$ on $\mathbb{R}$ follows.
\end{proof}
We recall from Fourier analysis (see Folland~\cite{F}), that if the Fourier transform
of a function $g$ is integrable, $\widehat{g}\in L^1(\mathbb{R})$, then the inverse Fourier transform admits the integral representation
\begin{equation}
g(x)=\frac{1}{2\pi}\int_{\mathbb{R}}\widehat{g}(y)\e^{\mathrm{i}xy}\,dy\,.
\end{equation} 
As $\widehat{f}_{c,T}\in L^1(\R)$, we can apply the inverse Fourier transform to obtain the representation
\begin{equation}
\label{fourier-payoff}
\left(\e^x-k\frac{\Lambda_2(T)}{\Lambda_1(T)}\right)^+=\frac1{2\pi}\int_{\R}\widehat{f}_{c,T}(y)\e^{\mathrm{i}x(y-\mathrm{i}c)}\,dy\,.
\end{equation}
Using this, we find the following price of the spread option.
\begin{prop}
\label{prop:spread-option-price}
For a given constant $c>1$, assume that 
$$
\E\left[\exp\left(\int_{-\infty}^T\psi(-\mathrm{i}cg(T,s)\sigma(s),-\mathrm{i}(1-c)h(T,s)\eta(s))\,ds\right)\right]<\infty\,.
$$
Then, the spread option price
$C(t,T)$ for $0\leq t\leq T$ defined in \eqref{spread-price} is,
\begin{align*}
&C(t,T)=\e^{-r(T-t)}\frac{\Lambda_1(T)}{2\pi} \\
&\qquad\times\int_{\R}\widehat{f}_{c,T}(y)\e^{(\mathrm{i}y+c)\int_{-\infty}^tg(T,s)\sigma(s-)\,dU(s)
+(1-(\mathrm{i}y+c))\int_{-\infty}^th(T,s)\eta(s-)\,dV(s)}\Psi_{c,t,T}(y)\,dy\,,
\end{align*}
where
$$
\Psi_{c,t,T}(y)=\E\left[\exp\left(\int_{t}^T\psi\left((y-\mathrm{i}c)g(T,s)\sigma(s),((c-1)\mathrm{i}-y)h(T,s)\eta(s)\right)\,ds\right)
\,\Bigl|\,\mathcal{F}_t\right]\,.
$$
Here, $\widehat{f}_{c,T}$ is defined in \eqref{fourier-payoff}.
\end{prop}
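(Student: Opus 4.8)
The plan is to reduce the bivariate payoff to a univariate one and then invoke the Fourier representation \eqref{fourier-payoff}. First I would write $S_i(T)=\Lambda_i(T)\e^{X(T)}$ (resp. $\e^{Y(T)}$) and factor the payoff as
\[
\left(S_1(T)-kS_2(T)\right)^+=\Lambda_1(T)\e^{Y(T)}\left(\e^{X(T)-Y(T)}-k\frac{\Lambda_2(T)}{\Lambda_1(T)}\right)^+,
\]
so that the nonlinearity enters only through the single real variable $x=X(T)-Y(T)$. Applying \eqref{fourier-payoff} at this $x$ and absorbing the prefactor $\e^{Y(T)}$ into the exponential, the identity $\e^{Y(T)}\e^{(\mathrm{i}y+c)(X(T)-Y(T))}=\e^{(\mathrm{i}y+c)X(T)+(1-(\mathrm{i}y+c))Y(T)}$ (using $\mathrm{i}(y-\mathrm{i}c)=\mathrm{i}y+c$) gives
\[
\left(S_1(T)-kS_2(T)\right)^+=\frac{\Lambda_1(T)}{2\pi}\int_{\R}\widehat{f}_{c,T}(y)\,\e^{(\mathrm{i}y+c)X(T)+(1-(\mathrm{i}y+c))Y(T)}\,dy.
\]

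Next I would substitute this into \eqref{spread-price} and interchange the conditional expectation with the $dy$-integral by Fubini's theorem. Since $X(T)$ and $Y(T)$ are real, the modulus of the integrand equals $|\widehat{f}_{c,T}(y)|\,\e^{cX(T)+(1-c)Y(T)}$; the first factor is integrable in $y$ by the preceding Lemma, while $\E[\e^{cX(T)+(1-c)Y(T)}]$ equals exactly the quantity assumed finite in the hypothesis (this follows from the conditional cumulant computation below with $a=c$). The joint $\E\otimes dy$ integrability of the modulus is therefore finite, which licenses the interchange of $\E[\,\cdot\mid\mathcal{F}_t]$ with the $dy$-integral.

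It then remains to evaluate $\E[\e^{aX(T)+(1-a)Y(T)}\mid\mathcal{F}_t]$ with $a=\mathrm{i}y+c$. Splitting each Volterra integral into its $\mathcal{F}_t$-measurable part over $(-\infty,t]$ and the future part over $(t,T]$, the past part factors out as the $\mathcal{F}_t$-measurable exponential displayed in the statement. For the future part I would condition additionally on the paths of $(\sigma,\eta)$: because $L$ is independent of $(\sigma,\eta)$ and has independent increments, the increments of $L$ over $(t,T]$ are independent of this enlarged conditioning, and $g(T,s)\sigma(s-),\,h(T,s)\eta(s-)$ become deterministic integrands. Applying the standard formula for the characteristic function of a deterministic integral against the Lévy process $L$ yields
\[
\E\left[\exp\left(a\!\int_t^T\! g(T,s)\sigma(s-)\,dU(s)+(1-a)\!\int_t^T\! h(T,s)\eta(s-)\,dV(s)\right)\Big|\,\sigma,\eta,\mathcal{F}_t\right]=\exp\left(\int_t^T\!\psi\bigl(-\mathrm{i}a\,g(T,s)\sigma(s),\,-\mathrm{i}(1-a)h(T,s)\eta(s)\bigr)\,ds\right),
\]
and taking the outer $\mathcal{F}_t$-expectation leaves precisely $\Psi_{c,t,T}(y)$, once one checks that $-\mathrm{i}a=y-\mathrm{i}c$ and $-\mathrm{i}(1-a)=(c-1)\mathrm{i}-y$. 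Collecting the three factors reproduces the claimed formula.

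I expect the main obstacle to be the careful justification of the two interchanges: the Fubini step across the $dy$-integral, handled by the integrability hypothesis together with the Lemma, and, more subtly, the conditioning on the volatility needed to turn the future increments of $L$ into deterministic integrands. The latter relies essentially on the independence of $L$ from $(\sigma,\eta)$ together with the independent-increment property, and on the applicability of the Lévy characteristic-function formula to the (generally non-semimartingale) Volterra integrands — which is legitimate here because, once the volatility is frozen, the kernels over $(t,T]$ are fixed deterministic functions integrated against independent increments of $L$.
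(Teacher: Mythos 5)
Your proof is correct, and it takes a genuinely more direct route than the paper on the key middle step. The paper follows the Carmona--Durrleman measure-change device: it treats $\e^{Y(T)}$ as a density, builds the martingale $R(t)=\exp\bigl(\int_{-\infty}^t h(T,s)\eta(s)\,dV(s)-\int_{-\infty}^t\psi_V(-\mathrm{i}h(T,s)\eta(s))\,ds\bigr)$, passes to the measure $Q$ with density process $Z(t)=R(t)/R(0)$, applies the Fourier representation of the univariate call payoff under $Q$, and then invokes Bayes' formula twice to return to $P$ --- arriving at exactly the conditional expectation $\E[\e^{aX(T)+bY(T)}\,|\,\mathcal{G}_{t,T}]$ with $a=\mathrm{i}y+c$, $b=1-a$ that you also compute. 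You bypass the measure change altogether: you absorb $\e^{Y(T)}$ algebraically into the Fourier exponent through the pathwise identity $\e^{Y(T)}\e^{(\mathrm{i}y+c)(X(T)-Y(T))}=\e^{(\mathrm{i}y+c)X(T)+(1-(\mathrm{i}y+c))Y(T)}$, and you justify interchanging $\E[\,\cdot\,|\,\mathcal{F}_t]$ with the $dy$-integral by Fubini, noting that the dominating function $|\widehat{f}_{c,T}(y)|\,\e^{cX(T)+(1-c)Y(T)}$ is jointly integrable precisely because $\E[\e^{cX(T)+(1-c)Y(T)}]$ equals the quantity assumed finite in the hypothesis. This buys some transparency: the paper's change of measure is ultimately presentational (it is undone before the end), its own Fourier step under $Q$ requires the very same Fubini interchange that you carry out but leaves it implicit, and your argument makes clear that the exponential-moment hypothesis of the proposition is exactly the $L^1(\E\otimes dy)$ domination needed. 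From the conditioning on the volatility paths onward --- the $\sigma$-algebra $\mathcal{G}_{t,T}$, freezing $g(T,s)\sigma(s-)$ and $h(T,s)\eta(s-)$ as deterministic integrands, the independent-increment cumulant formula, and the identifications $-\mathrm{i}a=y-\mathrm{i}c$, $-\mathrm{i}(1-a)=(c-1)\mathrm{i}-y$ --- your proof and the paper's coincide. One caveat, common to both arguments and not a gap relative to the paper: applying the cumulant formula at complex arguments presupposes that $\psi$ extends analytically to the relevant horizontal strip, which the hypothesis (and the paper's condition \eqref{exp-moment}) assumes only implicitly.
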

\begin{proof}
Let $\mathcal{G}_{t,T}$ be generated by the paths of $\sigma(s)$ and $\eta(s)$ for $s\leq T$ and $\mathcal{F}_t$. Then,
using the independence of $\sigma,\eta$ and $L$, it holds by the tower property of conditional expectation, 
\begin{align*}
\E\left[\left(S_1(T)-kS_2(T)\right)^+\,|\,\mathcal{F}_t\right]&=\E\left[\left(\Lambda_1(T)\e^{X(T)}-k\Lambda_2(T)\e^{Y(T)}\right)^+
\,\Bigl|\,\mathcal{F}_t\right] \\
&=\Lambda_1(T)\E\left[\e^{Y(T)}\left(\e^{X(T)-Y(T)}-k\frac{\Lambda_2(T)}{\Lambda_1(T)}\right)^+\,\Bigl|\,\mathcal{F}_t\right] \\
&=\Lambda_1(T)\E\left[\E\left[\e^{Y(T)}\left(\e^{X(T)-Y(T)}-k\frac{\Lambda_2(T)}{\Lambda_1(T)}\right)^+\,\Bigl|\,\mathcal{G}_{t,T}\right]
\,\Bigl|\,\mathcal{F}_t\right]\,.
\end{align*}
We concentrate on the inner expectation, and observe that as long as we condition on $\mathcal{G}_{t,T}$, we can treat
$\sigma(s)$ and $\eta(s)$ pathwise, and thus view $g(t,s)\sigma(s-)$ and 
$h(t,s)\eta(s-)$ as deterministic functions in the
integrals defining $X$ and $Y$. 

Define the stochastic process $R(t)$ for $t\leq T$ 
$$
R(t)=\exp\left(\int_{-\infty}^th(T,s)\eta(s)\,dV(s)-\int_{-\infty}^t\psi_V(-\mathrm{i}h(T,s)\eta(s))\,ds\right)\,.
$$
Note that by double conditioning and Jensen's inequality, we find from the 
independent increment property of $V$ that
$$
\E\left[\exp\left(\int_{-\infty}^th(T,s)\eta(s)\,dV(s)\right)\right]\leq
\E\left[\exp\left(\int_{-\infty}^Th(T,s)\eta(s)\,dV(s)\right)\right]
$$
for $t\leq T$. Hence, by the exponential integrability assumption in \eqref{exp-moment}, $R(t)$ becomes an integrable martingale process. Let $Z(t)=R(t)/R(0)$,
which becomes an integrable martingale with expectation 1 for $0\leq t\leq T$.  We introduce the probability 
measure $Q$ with density process $Z$, that is,
$$
\frac{dQ}{dP}\Bigl|_{\mathcal{G}_{t,T}}=Z(t)\,.
$$
Moreover, observe that
$$
\e^{Y(T)}=R(0)Z(T)\e^{\int_{-\infty}^T\psi_V(-\mathrm{i}h(T,s)\eta(s))\,ds}\,.
$$
Hence, applying Bayes' Formula of conditional expectations twice (see Karatzas and Shreve~\cite{KS}) together with $\mathcal{G}_{t,T}$-measurability
of $\eta(s)$, $s\leq T$,
\begin{align*}
\E&\left[\e^{Y(T)}\left(\e^{X(T)-Y(T)}-k\frac{\Lambda_2(T)}{\Lambda_1(T)}\right)^+\,\Bigl|\,\mathcal{G}_{t,T}\right] \\
&\qquad=\e^{\int_{-\infty}^T\psi_V(-\mathrm{i}h(T,s)\eta(s))\,ds}R(0)Z(t)\E_{Q}\left[\left(\e^{X(T)-Y(T)}-k\frac{\Lambda_2(T)}{\Lambda_1(T)}\right)^+\,\Bigl|\,\mathcal{G}_{t,T}\right] \\
&\qquad=\e^{\int_{-\infty}^T\psi_V(-\mathrm{i}h(T,s)\eta(s))\,ds}R(0)Z(t)\frac1{2\pi}\int_{\R}\widehat{f}_{c,T}(y)
\E_Q\left[\e^{\mathrm{i}(y-\mathrm{i}c)(X(T)-Y(T))}\,\Bigl|\,\mathcal{G}_{t,T}\right]\,dy \\
&\qquad=\frac1{2\pi}\int_{\R}\widehat{f}_{c,T}(y)
\E\left[\e^{Y(T)}\e^{\mathrm{i}(y-\mathrm{i}c)(X(T)-Y(T))}\,\Bigl|\,\mathcal{G}_{t,T}\right]\,dy \,.
\end{align*}
For two constants $a$ and $b$ (possibly complex), we find (assuming 
that the involved processes are integrable) that 
\begin{align*}
\E\left[\e^{aX(T)+bY(T)}\,|\,\mathcal{G}_{t,T}\right]&=
\e^{a\int_{-\infty}^tg(T,s)\sigma(s)\,dU(s)+b\int_{-\infty}^th(T,s)\eta(s)\,dV(s)} \\
&\quad\times\E\left[\e^{a\int_t^Tg(T,s)\sigma(s)\,dU(s)+b\int_t^Th(T,s)\eta(s)\,V(s)}\,|\,\mathcal{G}_{t,T}\right]\,.
\end{align*}
Here, we have applied the $\mathcal{G}_{t,T}$-measurability of $U(s),V(s)$ for $s\leq t$. Because increments of $U(s)$ and $V(s)$ 
are independent of $\mathcal{G}_{t,T}$ for $s\in[t,T]$, we get
\begin{align*}
\E\left[\e^{aX(T)+bY(T)}\,|\,\mathcal{G}_{t,T}\right]&=
\e^{a\int_{-\infty}^tg(T,s)\sigma(s)\,dU(s)+b\int_{-\infty}^th(T,s)\eta(s)\,dV(s)} \\
&\quad\times\e^{\int_t^T\psi(-\mathrm{i}ag(T,s)\sigma(s),-\mathrm{i}bh(T,s)\eta(s))\,ds}\,.
\end{align*}
Letting $a=\mathrm{i}y+c$ and $b=1-(\mathrm{i}y+c)$ yields the result by the 
assumed exponential integrability condition on the processes $X$ and $Y$.
\end{proof}
The trick of changing probability measure to price spread options, as we applied in
the proof above, was suggested in Carmona and Durrleman~\cite{CD} 
in the case of underlying processes being modelled by 
a bivariate geometric Brownian motion. 
Here we extend the method to general VMV processes for the underlying assets in
the spread option. Worth noticing is that in the geometric Brownian motion case
normality is preserved and one can compute the spread option price without
resorting to an integral expression involving Fourier transform. In our 
much more general context 
it is more natural to resort to a price $C(t,x)$ expressed in term of the 
characteristics of the driving 
processes $X$ and $Y$, which naturally leads to the application of 
Fourier methods. As we recall from the proof above, we apply the 
change of measure twice, and come back to the original probability $P$
in the final pricing expression. Thus,
we do not need to know the characteristics of $X$ an $Y$ under a new 
probability in order to derive the price $C(t,x)$.

Remark that the option price at time $t\leq T$ is explicitly dependent on $\int_{-\infty}^tg(T,s)\sigma(s)\,dU(s)$ and
$\int_{-\infty}^th(T,s)\eta(s)\,dV(s)$, which are different than $X(t)$ and $Y(t)$
except at $t=T$. If we consider the special case
of an OU-process, then $g(t-s)=\exp(-\alpha(t-s)$, we find 
$$
\int_{-\infty}^t\e^{-\alpha(T-s)}\sigma(s)\,dU(s)=\e^{-\alpha(T-t)}\int_{-\infty}^t\e^{-\alpha(t-s)}\sigma(s)\,dU(s)=\e^{-\alpha(T-t)}X(t)\,.
$$
Thus, we have an explicit dependency on $X(t)$ in $C(t,T)$ as long as $g$ is the kernel function of an OU-process. As it turns out, we can in the general case relate  $\int_{-\infty}^tg(T,s)\sigma(s)\,dU(s)$ and
$\int_{-\infty}^th(T,s)\eta(s)\,dV(s)$ to the forward price on the spots. To this end,
denote by $f_i(t,T)$ the forward price at time $t$ for a contract delivering the spot
$S_i$ at time $T$, $t\leq T$ and $i=1,2$. By definition of the arbitrage-free forward price
(see Duffie~\cite{Duffie} and Benth et al.~\cite{BSBK-book}), 
\begin{equation}
f_i(t,T)=\E\left[S_i(T)\,|\,\mathcal{F}_t\right]\,, i=1,2\,,
\end{equation}
which is well-defined as $S_i(T)\in L^1(P)$ by condition \eqref{exp-moment}. 
We find:
\begin{prop}
\label{prop:forward-price}
It holds that 
$$
f_1(t,T)=\Lambda_1(T)\exp\left(\int_{-\infty}^tg(T,s)\sigma(s-)\,dU(s)\right)
\E\left[\exp\left(\int_t^T\psi_U(-\mathrm{i}g(T,s)\sigma(s))\,ds\right)\,|\,\mathcal{F}_t\right]
$$
and
$$
f_2(t,T)=\Lambda_2(T)\exp\left(\int_{-\infty}^th(T,s)\eta(s-)\,dV(s)\right)
\E\left[\exp\left(\int_t^T\psi_V(-\mathrm{i}h(T,s)\eta(s))\,ds\right)\,|\,\mathcal{F}_t\right]
$$
for $t\leq T$. 
\end{prop}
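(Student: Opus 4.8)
The plan is to compute $f_1(t,T)=\E[S_1(T)\mid\mathcal{F}_t]$ directly from the definition, reusing the conditional moment-generating computation already carried out in the proof of Proposition~\ref{prop:spread-option-price}; the argument for $f_2$ is entirely symmetric. First I would write $S_1(T)=\Lambda_1(T)\e^{X(T)}$ and pull the deterministic factor $\Lambda_1(T)$ out of the conditional expectation, reducing the claim to the evaluation of $\E[\e^{X(T)}\mid\mathcal{F}_t]$.

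Next I would invoke the tower property with respect to the $\sigma$-algebra $\mathcal{G}_{t,T}$ generated by the paths of $\sigma,\eta$ up to time $T$ together with $\mathcal{F}_t$, exactly as in the proof of Proposition~\ref{prop:spread-option-price}. Conditioning on $\mathcal{G}_{t,T}$ freezes the volatility, so that $g(T,s)\sigma(s-)$ becomes a deterministic integrand and $X(T)$ is an integral of a deterministic function against the L\'evy process $U$. Specialising the identity
\begin{equation*}
\E\left[\e^{aX(T)+bY(T)}\mid\mathcal{G}_{t,T}\right]=\e^{a\int_{-\infty}^tg(T,s)\sigma(s)\,dU(s)+b\int_{-\infty}^th(T,s)\eta(s)\,dV(s)}\,\e^{\int_t^T\psi(-\mathrm{i}ag(T,s)\sigma(s),-\mathrm{i}bh(T,s)\eta(s))\,ds}
\end{equation*}
established there to the case $(a,b)=(1,0)$, and using $\psi(x,0)=\psi_U(x)$, I obtain
\begin{equation*}
\E\left[\e^{X(T)}\mid\mathcal{G}_{t,T}\right]=\exp\left(\int_{-\infty}^tg(T,s)\sigma(s-)\,dU(s)\right)\exp\left(\int_t^T\psi_U(-\mathrm{i}g(T,s)\sigma(s))\,ds\right)\,.
\end{equation*}

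The leading factor $\exp(\int_{-\infty}^tg(T,s)\sigma(s-)\,dU(s))$ is $\mathcal{G}_{t,T}$-measurable, and in fact $\mathcal{F}_t$-measurable since it depends only on $\sigma$ and $U$ up to time $t$; hence applying $\E[\,\cdot\mid\mathcal{F}_t]$ and using the tower property once more leaves this factor outside the expectation and yields precisely the stated formula for $f_1(t,T)$. Running the same steps with $(a,b)=(0,1)$ and $\psi(0,x)=\psi_V(x)$ produces $f_2(t,T)$.

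The only point requiring genuine care is the justification of the frozen-volatility identity and the finiteness of all quantities involved. The independence of $(\sigma,\eta)$ and $L$, together with the independent-increment property of $U$, is what allows the stochastic integral over $[t,T]$ to be treated, conditionally on $\mathcal{G}_{t,T}$, as an integral of a deterministic integrand, so that the L\'evy--Khintchin cumulant formula applies pathwise; the exponential integrability condition~\eqref{exp-moment} guarantees $S_1(T)\in L^1(P)$, so that the forward price and every conditional expectation above are well defined and finite. Since the key identity was already proved in Proposition~\ref{prop:spread-option-price}, the present proof is essentially a specialisation of that computation and should involve no additional obstacle.
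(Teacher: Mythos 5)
Your proof is correct and follows essentially the same route as the paper's: both decompose $X(T)$ at time $t$, pull out the $\mathcal{F}_t$-measurable integral $\int_{-\infty}^t g(T,s)\sigma(s-)\,dU(s)$, and evaluate the remaining piece by conditioning on $\mathcal{G}_{t,T}$ so that the frozen-volatility cumulant formula $\E[\e^{\int_t^T g(T,s)\sigma(s-)\,dU(s)}\mid\mathcal{G}_{t,T}]=\e^{\int_t^T\psi_U(-\mathrm{i}g(T,s)\sigma(s))\,ds}$ applies, exactly as in the paper (which likewise reuses the computation from the proof of Proposition~\ref{prop:spread-option-price} with the same tower-property argument). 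The only cosmetic difference is the order of operations --- you condition on $\mathcal{G}_{t,T}$ first and then descend to $\mathcal{F}_t$, while the paper factors out the $\mathcal{F}_t$-adapted term before invoking the tower property --- which changes nothing of substance.
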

\begin{proof}
By the exponential integrability condition \eqref{exp-moment}, $S_i(T)\in L^1(P)$ and the expectation 
operator applied to $S_i(T)$ makes sense.
Without loss of generality, we only prove the result for $i=1$. Recall from the definition of $S_1(t)$ 
in \eqref{spot1-def} that
$$
S_1(T)=\Lambda_1(T)\exp(X(T))
$$ 
where 
$$
X(T)=\int_{-\infty}^Tg(T,s)\sigma(s-)\,dU(s)=\int_{-\infty}^tg(T,s)\sigma(s-)\,dU(s)
+\int_t^{T}g(T,s)\sigma(s-)\,dU(s)\,.
$$
Because the first term in this decomposition of $X(T)$ is $\mathcal{F}_t$-adapted, we
have
$$
f_1(t,T)=\Lambda_1(T)\exp\left(\int_{-\infty}^tg(T,s)\sigma(s-)\,dU(s)\right)
\E\left[\exp\left(\int_t^Tg(T,s)\sigma(s-)\,dU(s)\right)\,|\,\mathcal{F}_t\right]\,.
$$
By the tower law of conditional expectations, 
\begin{align*}
\E\left[\exp\left(\int_t^Tg(T,s)\sigma(s-)\,dU(s)\right)\,|\,\mathcal{F}_t\right]&=
\E\left[\E\left[\exp\left(\int_t^Tg(T,s)\sigma(s-)\,dU(s)\right)\,|\,\mathcal{G}_{t,T}\right]\,|\,\mathcal{F}_t\right] \\
&=\E\left[\exp\left(\int_t^T\psi_U(-\mathrm{i}g(T,s)\sigma(s))\,ds\right)\,|\,\mathcal{F}_t\right]\,,
\end{align*}
where $\mathcal{G}_{t,T}$ is defined in the proof of 
Prop.~\ref{prop:spread-option-price}. In the argument above,
we applied that $\sigma(s)$ is $\mathcal{G}_{t,T}$-measurable for $s\in[t,T]$ and
the definition of the cumulant function of $U$ with the independent increment
property of a L\'evy process. The proposition follows.
\end{proof}
From this Proposition, we can reexpress the option price as a function of the forwards,
i.e., 
\begin{equation}
C(t,T)=\widetilde{C}(t,T,f_1(t,T),f_2(t,T))\,,
\end{equation}
where, for $x_i>0, i=1,2$,
\begin{align}
\label{eq:spread-price-forward}
\widetilde{C}(t,T,x_1,x_2)&=\e^{-r(T-t)}\frac{\Lambda_1(T)}{2\pi}
\int_{\R}\widehat{f}_{c,T}(y)
\exp\left((\mathrm{i}y+c)\left(\ln\frac{x_1}{\Lambda_1(T)}-\ln\Psi_U(t,T)\right)\right) \nonumber\\
&\qquad\times\exp\left((1-(\mathrm{i}y+c))\left(\ln\frac{x_2}{\Lambda_2(T)}-\ln\Psi_V(t,T)\right)\right) \Psi_{c,t,T}(y)\,dy\,,
\end{align}
and
\begin{align*}
\Psi_U(t,T)&= \E\left[\exp\left(\int_t^T\psi_U(-\mathrm{i}g(T,s)\sigma(s))\,ds\right)\,|\,\mathcal{F}_t\right]\\
\Psi_V(t,T)&=\E\left[\exp\left(\int_t^T\psi_V(-\mathrm{i}h(T,s)\eta(s))\,ds\right)\,|\,\mathcal{F}_t\right]\,.
\end{align*}
Note that we have rather complicated terms $\Psi_i(t,T), \Psi_{c,t,T}(y)$ involving the conditional expectations
of functionals of the stochastic volatility processes $\sigma(s)$ and $\eta(s)$. 
In the next Section we shall employ the dependency on
forwards to derive hedging strategies for the option. 

We recover a generalization of the Margrabe formula in the case
of $L=(B,W)$ being a bivariate Brownian motion and
volatilities $\sigma$ and $\eta$ being deterministic. Obviously, with loss of 
generality, we can in the case of deterministic volatilty functions let 
assume that $\sigma(s)=\eta(s)=1$ because we can redefine the kernel
functions $g$ and $h$ by $\widetilde{g}(t,s)=g(t,s)\sigma(s)$ and
$\widetilde{h}(t,s)=h(t,s)\eta(s)$. We further assume that
$B$ and $W$ are correlated by $\rho\in(-1,1)$. Then
$$
\psi(x,y)=-\frac12(x^2+2\rho xy+y^2)\,.
$$ 
Hence,
\begin{align*}
\ln\Psi_{c,T}(y)&=-\frac12\left((y-\mathrm{i}c)^2\int_{t}^Tg^2(T,s)\,ds+2\rho(y-\mathrm{i}c)((c-1)\mathrm{i}-y)\int_{t}^Tg(T,s)h(T,s)\,ds \right.\\
&\qquad\left.+((c-1)\mathrm{i}-y)^2\int_{t}^Th^2(T,s)\,ds\right)\,,
\end{align*}
and
$$
\ln\Psi_U(t,T)=\frac12\int_t^Tg^2(T,s)\,ds\,,\ln\Psi_V(t,T)=\frac12\int_t^Th^2(T,s)\,ds\,.
$$
We recall from the Fourier transform and its inverse that if $Z$ is a random 
variable with characteristic
function $\psi_Z$, then (see e.g. Folland~\cite{F}), 
\begin{equation} \label{eq:rel}
\int_{\mathbb{R}} \hat{f}_{c,T}(y)\psi_Z(y)\,dy=2\pi\E[f_{c,T}(Z)]\,.
\end{equation}
Let now $Z$ be normally distributed with variance $\Sigma^2(t,T)$ given as
\begin{equation}
\label{tot-vol}
\Sigma^2(t,T):=\int_t^T\left\{g^2(T,s)-2\rho g(T,s)h(T,s)+h^2(T,s)\right\}\,ds
\end{equation}
and mean $\mu$ as
$$
\mu:=\ln\frac{x_1}{x_2} +\ln\frac{\Lambda_2(T)}{\Lambda_1(T)}+(c-\frac12)\Sigma^2(t,T)\,.
$$
Collecting appropriate terms in \eqref{eq:spread-price-forward} yields
\begin{align*}
\widetilde{C}(t,T,x_1,x_2)&=\e^{-r(T-t)}\frac{\Lambda_1(T)}{2\pi}\e^{\alpha}\int_{\mathbb{R}}\hat{f}_{c,T}(y)\psi_Z(y)\,dy \\
&=\e^{-r(T-t)}\Lambda_1(T)\e^{\alpha}\E\left[f_{c,T}(Z)\right]\,,
\end{align*}
with
$$
\alpha:=\ln\frac{x_2}{\Lambda_2(T)}+c\ln\frac{x_1}{x_2}+
c\ln\frac{\Lambda_2(T)}{\Lambda_1(T)}+\frac12c(c-1)\Sigma^2(t,T)\,.
$$
But a straightforward computation of the expected value of $f_{c,T}(Z)$ 
gives us the result (after some algebra)
\begin{equation}
\widetilde{C}(t,T,x_1,x_2)=\e^{-r(T-t)}\left\{x_1 N(d_1(t,T))-kx_2N(d_2(t,T))\right\}\,,
\end{equation}
where $N(d)$ is the cumulative standard normal probability distribution function, $d_1(t,T)=d_2(t,T)+\Sigma(t,T)$, and
\begin{equation}
d_2(t,T)=\frac{\ln \frac{x_1}{x_2}-\ln k-\frac12\Sigma^2(t,T)}{\Sigma(t,T)}\,,
\end{equation}
and $\Sigma(t,T)$ is defined in \eqref{tot-vol}.
Not surprisingly, we are back to the Margrabe's Formula (see Margrabe~\cite{M})
extended to Gaussian Volterra processes. 
We remark that in the case of stochastic volatility processes $\sigma$ and 
$\eta$ being independent of the Gaussian processes $L=(B,W)$,
we can apply conditioning to obtain the pricing expression,
\begin{equation}
\widetilde{C}(t,T,x_1,x_2)=\e^{-r(T-t)}\left\{x_1\E[N(d_1(t,T))]-kx_2\E[N(d_2(t,T))]\right\}\,,
\end{equation} 
where $\Sigma(t,T)$ in \eqref{tot-vol} becomes a random variable defined as
\begin{equation}
\Sigma^2(t,T):=\int_t^T\left\{g^2(T,s)\sigma^2(s)-2\rho g(T,s)h(T,s)\sigma(s)\eta(s)
+h^2(T,s)\eta^2(s)\right\}\,ds\,.
\end{equation}
For given stochastic volatility models, one must in practice resort to Monte Carlo
methods to find $\widetilde{C}$. It may be more efficient to go back to the 
original Fourier expression in this case.

\section{Quadratic hedging in the forward market}

In this Section we employ the functional dependency on forward prices 
$f_1(t,T), f_2(t,T)$ in 
the spread option price $C(t,T)$ to study the question of hedging. To simplify matters considerably, 
we focus our attention to the non-stochastic volatility case, that is, we assume that
$\sigma(t)\equiv\sigma$ and $\eta(t)\equiv\eta$ for two positive constants $\sigma$ and
$\eta$. Obviously, by scaling the kernel functions $g$ and $h$, we may without loss of generality assume $\sigma=\eta=1$. Hence, from 
Prop.~\ref{prop:forward-price}, 
we find the following forward prices written on $S_i$, $i=1,2$ for $t\leq T$,
$$
f_1(t,T)=\Lambda_1(T)\exp\left(\int_{-\infty}^tg(T,s)\,dU(s)+
\int_t^T\psi_U(-\mathrm{i}g(T,s))\,ds\right)
$$
and
$$
f_2(t,T)=\Lambda_2(T)\exp\left(\int_{-\infty}^th(T,s)\,dV(s)+
\int_t^T\psi_V(-\mathrm{i}h(T,s))\,ds\right)\,.
$$
The forward price dynamics are martingales, and by a direct 
application of the It\^o Formula for jump processes (see e.g. \O ksendal and
Sulem~\cite{OS}) we have the following:
\begin{align*}
\frac{df_1(t,T)}{f_1(t-,T)}&=c_1g(T,s)\,dW_1(t)+\int_{\R^2}\left(\e^{z_1g(T,t)}-1\right)\,\widetilde{N}(dz_1,dz_2,dt)\,, \\
\frac{df_2(t,T)}{f_2(t-,T)}&=c_2h(T,t)\,dW_2(t)+\int_{\R^2}\left(\e^{z_2h(T,t)}-1\right)\,\widetilde{N}(dz_1,dz_2,dt)\,. 
\end{align*}
Here, $\widetilde{N}(dz_1,dz_2,dt)$ is the compensated Poisson random 
measure of $L=(U,V)$, and $W_1$ and $W_2$ are the two Brownian motions in
the L\'evy-Kintchine representation of $L=(U,V)$ which are correlated by $\rho$.

We seek to find a self-financing portfolio of forwards $f_1$ and $f_2$ and a bank account
such that we minimize the hedging error. The hedging error is measured in terms
of the expected quadratic distance between the hedging portfolio and the payoff of the spread
option. This is known as the quadratic hedge (see Cont and Tankov~\cite{CT}). 

Denote by $(\phi_0,\phi_1,\phi_2)$ the investment strategy where $\phi_0(t)$ is the
amount of money in the bank at time $t$ yielding a risk free interest $r$ and $\phi_i(t)$
is the position in forward $f_i(t,T)$ at time $t$, $i=1,2$. We suppose 
$t\mapsto (\phi_0(t),\phi_1(t),\phi_2(t))$ is $\mathcal{F}_t$-adapted. As forwards are costless
to enter (either short or long), the value of this portfolio at time $t$, denoted $V(t)$, is
the amount of money held in the bank at time $t$. Assuming a self-financing portfolio,
the change of portfolio value, on the other hand, will depend on the change of forward
prices. The discounted portfolio value, $\widehat{V}(t)=\exp(-rt)V(t)$, which is a martingale, will satisfy the dynamics
\begin{equation}
\label{eq:self-fin-portfolio}
d\widehat{V}(t)=\phi_1(t)\e^{-rt}\,df_1(t,T)+\phi_2(t)\e^{-rt}\,df_2(t,T)\,,
\end{equation}
by the self-financing hypothesis. We assume that $V(0)=\widehat{V}(0)=C(0,T)$, and
define the hedging error to be
\begin{equation}
\epsilon(\phi_1,\phi_2):=\widehat{V}(T)-\widehat{C}(T,T)\,,
\end{equation}
with $\widehat{C}(t,T)=\exp(-rt)C(t,T)$. Our aim is to find a strategy that minimizes the 
error, that is, find $\phi_1,\phi_2$ such that $\E[\epsilon^2(\phi_1,\phi_2)]$ is minimized.
This strategy is derived in the next Proposition:
\begin{prop}
Introduce the matrix $\mathbb{A}(t)\in\mathbb{R}^{2\times 2}$ with the elements
$a_{ij}(t), i,j=1,2$ defined as
\begin{align*}
a_{11}(t)&=\e^{-rt}f_1^2(t,T)\left\{c_1^2+\int_{\mathbb{R}^2}(\e^{z_1g(t,T)}-1)^2\,\ell(dz_1,dz_2)\right\} \\
a_{12}(t)&=a_{21}(t)=\e^{-rt}f_1(t,T)f_2(t,T)\left\{\frac12\rho c_1c_2
+\int_{\mathbb{R}^2}(\e^{z_1g(t,T)}-1)(\e^{z_2h(t,T)}-1)\,\ell(dz_1,dz_2)\right\} \\
a_{22}(t)&=\e^{-rt}f_2^2(t,T)\left\{c_2^2+\int_{\mathbb{R}^2}(\e^{z_2h(t,T)}-1)^2\,
\ell(dz_1,dz_2)\right\}\,.
\end{align*}
Furthermore, let $\mathbf{b}(t)\in\mathbb{R}^2$ be the vector with elements
\begin{align*}
b_1(t)&=\frac{\partial\widehat{C}}{\partial f_1}(t,T,f_1(t,T),f_2(t,T))f_1^2(t,T)c_1^2 \\
&\qquad+\frac12\rho c_1c_2 \frac{\partial\widehat{C}}{\partial f_2}(t,T,f_1(t,T),f_2(t,T))
f_1(t,T)f_2(t,T) \\
&\qquad+\int_{\mathbb{R}^2}\left\{\widehat{C}(t,f_1(t,T)(1+z_1),f_2(t,T)(1+z_2))-
\widehat{C}(t,T,f_1(t,T),f_2(t,T)) \right. \\ 
&\qquad\qquad\left.-\sum_{i=1}^{2} z_i \frac{\partial \widehat{C}}{\partial f_i}(t,T, f_1(t,T),f_2(t,T))\right\}f_1(t,T)(e^{z_1g(t,T)}-1)\,\ell(dz_1,dz_2)\,,\\
b_2(t)&=\frac{\partial\widehat{C}}{\partial f_2}(t,T,f_1(t,T),f_2(t,T))f_2^2(t,T)c_2^2 \\
&\qquad+\frac12\rho c_1c_2 \frac{\partial\widehat{C}}{\partial f_1}(t,T,f_1(t,T),f_2(t,T))
f_1(t,T)f_2(t,T) \\
&\qquad+\int_{\mathbb{R}^2}\left\{\widehat{C}(t,f_1(t,T)(1+z_1),f_2(t,T)(1+z_2))-
\widehat{C}(t,T,f_1(t,T),f_2(t,T)) \right. \\ 
&\qquad\qquad\left.-\sum_{i=1}^{2} z_i \frac{\partial \widehat{C}}{\partial f_i}(t,T, f_1(t,T),f_2(t,T))\right\}f_2(t,T)(e^{z_2h(t,T)}-1)\,\ell(dz_1,dz_2)\,\,.
\end{align*}
Assume that $\mathbb{A}(t)$ is invertible for every $t\leq T$. Then the 
quadratic hedging strategy $\mathbf{\phi}(t)=(\phi_1(t),\phi_2(t))^*$ is the 
unique solution to $\mathbb{A}(t)\phi(t)=\mathbf{b}(t)$. 
\end{prop}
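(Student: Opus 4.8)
The plan is to exploit that, under the pricing measure $P$, both the discounted option value $\widehat{C}(t,T)$ and the discounted self-financing portfolio value $\widehat{V}(t)$ are square-integrable martingales, and to argue as in the Galtchouk--Kunita--Watanabe / risk-minimization theory under a martingale measure (see Cont and Tankov~\cite{CT}). First I would set $M_t:=\widehat{V}(t)-\widehat{C}(t,T)$ and note that $M$ is a martingale with $M_0=\widehat{V}(0)-\widehat{C}(0,T)=0$, by the assumption $V(0)=C(0,T)$. Hence the hedging error is $\epsilon(\phi_1,\phi_2)=M_T$ and
$$\E\!\left[\epsilon^2(\phi_1,\phi_2)\right]=\E\!\left[M_T^2\right]=\E\!\left[\langle M\rangle_T\right]\,.$$
The problem is thereby reduced to minimizing the expected predictable quadratic variation of $M$ over admissible (adapted, square-integrable) strategies $(\phi_1,\phi_2)$.

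The second step is to obtain an explicit martingale representation of $dM_t$ against the driving noises $dW_1$, $dW_2$ and the compensated jump measure $\widetilde{N}$. The portfolio contribution is read off directly from \eqref{eq:self-fin-portfolio} together with the stated dynamics of $f_1,f_2$. For the option contribution I would apply the It\^o formula for jump processes to $\widehat{C}(t,T,f_1(t,T),f_2(t,T))$; since $\widehat{C}(t,T)$ is a martingale, the finite-variation (drift) part must vanish, leaving a Brownian part with coefficients $\partial_{x_1}\widehat{C}\,f_1c_1g(T,t)$ and $\partial_{x_2}\widehat{C}\,f_2c_2h(T,t)$ and a jump part $\widehat{C}(t,T,f_1\e^{z_1g(T,t)},f_2\e^{z_2h(T,t)})-\widehat{C}(t,T,f_1,f_2)$ integrated against $\widetilde{N}$. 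Subtracting the two gives a process $dM_t$ whose $dW_1$-, $dW_2$- and $\widetilde{N}$-integrands are affine in $(\phi_1(t),\phi_2(t))$.

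The third step computes $\langle M\rangle_T$ from these integrands, using the correlation $\rho$ between $W_1,W_2$ for the continuous covariation and the L\'evy measure $\ell$ as the compensator of the jump part, so that
$$\E\!\left[\langle M\rangle_T\right]=\int_0^T\E\!\left[G\big(t,\phi_1(t),\phi_2(t)\big)\right]dt\,,$$
where, for each fixed $t$ and $\omega$, the integrand $G$ is a convex quadratic form in $(\phi_1(t),\phi_2(t))$ with $\mathcal{F}_{t-}$-measurable coefficients. Because admissible strategies are adapted and $G(t,\cdot)$ is convex with predictable data, the minimization may be carried out pointwise in time: the minimizer is the $\mathcal{F}_{t-}$-measurable critical point of $G(t,\cdot)$. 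Imposing $\partial G/\partial\phi_1=\partial G/\partial\phi_2=0$, collecting the quadratic (Hessian) terms into $\mathbb{A}(t)$ and the terms coming from the sensitivities $\partial_{x_i}\widehat{C}$ and the jump increments into $\mathbf{b}(t)$, yields exactly $\mathbb{A}(t)\phi(t)=\mathbf{b}(t)$. As $\mathbb{A}(t)$ is (one half of) the Hessian of $G$, it is a symmetric positive semidefinite covariance-type matrix; under the assumed invertibility it is positive definite, so the critical point is the unique global minimizer.

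The main obstacle is the bookkeeping in the second and third steps: carefully applying the jump-diffusion It\^o formula, matching the resulting Brownian covariances (with the factor $\rho$) and the jump integrals against $\ell$ to the stated entries $a_{ij}(t)$ and $b_i(t)$, and keeping the discount factors $\e^{-rt}$ consistent throughout. A secondary point needing care is the justification of the pointwise-in-time minimization, namely that exchanging the minimization with the time integral and the expectation is legitimate; this rests on the adaptedness of $\phi$, the $\mathcal{F}_{t-}$-measurability of the coefficients of $G$, and the square-integrability guaranteeing that all stochastic integrals and $\langle M\rangle_T$ are well defined.
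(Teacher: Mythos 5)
Your proposal follows essentially the same route as the paper's proof: the paper likewise writes $\widehat{V}(T)$ and $\widehat{C}(T,T)$ as stochastic integrals against $dW_1$, $dW_2$ and $\widetilde{N}$ (the latter by the jump It\^o formula applied to the martingale $t\mapsto\widehat{C}(t,T,f_1(t,T),f_2(t,T))$), computes $\E[\varepsilon^2]$ by the isometry formula for stochastic integrals --- which is exactly your identity $\E[M_T^2]=\E[\langle M\rangle_T]$ --- and then obtains $\mathbb{A}(t)\phi(t)=\mathbf{b}(t)$ from pointwise first-order conditions, with your convexity/positive-definiteness remark supplying the uniqueness justification the paper leaves implicit. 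The only divergence is in the jump bookkeeping, where your integrand $\widehat{C}(t,T,f_1\e^{z_1g(T,t)},f_2\e^{z_2h(T,t)})-\widehat{C}(t,T,f_1,f_2)$ is the standard (and correct) martingale representation, whereas the paper's proof and statement carry $(1+z_i)$-shifted arguments and an extra $\sum_{i}z_i\,\partial\widehat{C}/\partial f_i$ term inside the $\widetilde{N}$-integrand --- a discrepancy that reflects sloppiness in the paper rather than a gap in your argument.
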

\begin{proof}
We have from the definition of $\widehat{V}(t)$ in \eqref{eq:self-fin-portfolio},
\begin{align*}
\widehat{V}(T)&=V(0)+\int_0^T \phi_1(t)\e^{-rt}df_1(t,T)
+\int_0^T \phi_2(t)\e^{-rt}df_2(t,T) \\
&=V(0)+\int_0^T \phi_1(t)\e^{-rt}f_1(t,T)c_1\,dW_1(t) 
+ \int_0^T\phi_2(t)\e^{-rt}f_2(t,T)c_2dW_2(t) \\
&\qquad+\int_0^T\int_{\mathbb{R}^2}\phi_1(t)\e^{-rt}f_1(t-,T)
\left(\e^{z_1g(T,t)}-1\right)\,\widetilde{N}(dz_1,dz_2,dt) \\
&\qquad+\int_0^T\int_{\mathbb{R}^2}\phi_2(t)\e^{-rt}f_2(t-,T)\left(\e^{z_2h(T,t)}-1\right)\,\widetilde{N}(dz_1, dz_2,dt). \nonumber
\end{align*}
We next apply the It\^o Formula on the martingale process
$t\mapsto\widehat{C}(t,T):=\widehat{C}(t,T,f_1(t,T),f_2(t,T))$, $t\leq T$, where 
we have emphasized the explicit dependency on $f_1(t,T)$ and $f_2(t,T)$ in
the spread option price (recalling \eqref{eq:spread-price-forward}). We calculate,
\begin{align*}
\widehat{C}&(T,T,f_1(T,T),f_2(T,T))=C(0,T,f_1(0,T),f_2(0,T)) \\
&\qquad+\int_0^T \frac{\partial \widehat{C}}{\partial f_1}(t,T, f_1(t,T),f_2(t,T))f_1(t,T)c_1dW_1(t) \\
&\qquad+\int_0^T \frac{\partial \widehat{C}}{\partial f_2}(t, T,f_1(t,T),f_2(t,T)) f_2(t,T)c_2dW_2(t) \\
&\qquad+\int_0^T \int_{\mathbb{R}^2} \Bigl(\widehat{C}(t-,T,f_1(t-,T)(1+z_1),f_2(t-,T)(1+z_2))-\widehat{C}(t-,T,f_1(t-,T,f_2(t-,T))\\
&\qquad\qquad+\sum_{i=1}^{2} z_i \frac{\partial \widehat{C}}{\partial f_i}(t-, f_1(t-,T),f_2(t-,T)) f_i(t-,T)\Bigr)\,\widetilde{N}(dz_1, dz_2,dt).
\end{align*}
The hedging error is thus equal to (recalling that $V(0)=C(0,T)$),
\begin{align*}
\varepsilon(\phi_1,\phi_2)&=
\int_0^T\left\{\phi_1(t)\e^{-rt}-\frac{\partial \widehat{C}}{\partial f_1}(t,T, f_1(t,T),f_2(t,T))\right\}f_1(t,T)c_1\,dW_1(t)\\
&\qquad+\int_0^T\left\{\phi_2(t)\e^{-rt}-\frac{\partial \widehat{C}}{\partial f_2}(t,T, f_1(t,T),f_2(t,T))\right\}f_2(t,T)c_2\,dW_2(t)\\
&\qquad+\int_0^T \int_{\mathbb{R}^2} \left\{\phi_1(t-)\e^{-rt}f_1(t-,T)(\e^{z_1g(t,T)}-1) +  \phi_2(t-)\e^{-rt}f_2(t-,T)(\e^{z_2h(t,T)}-1) \right.\\
&\qquad\qquad-\left.\left(\widehat{C}(t,T,f_1(t-,T)(1+z_1),f_2(t-,T)(1+z_2))-\widehat{C}(t-,T,f_1(t-,T),f_2(t-,T))\right.\right.\\
&\qquad\qquad\qquad\left.\left.-\sum_{i=1}^{2} z_i \frac{\partial \widehat{C}}{\partial f_i}(t-, f_1(t-,T),f_2(t-,T))f_i(t-,T)\right)\right\}\,\widetilde{N}(dz_1, dz_2,dt).
\end{align*}
By the isometry formula for stochastic integrals,
\begin{align*}
\E\left[\varepsilon^2(\phi_1,\phi_2)\right]&=
\E\left[\int_0^T\left\{\phi_1(t)\e^{-rt}-\frac{\partial \widehat{C}}{\partial f_1}(t,T, f_1(t,T),f_2(t,T))\right\}^2 f_1^2(t,T)c_1^2\,dt\right]\\
&\qquad+\E\left[\int_0^T\left\{\phi_2(t)\e^{-rt}+\frac{\partial \widehat{C}}{\partial f_2}(t,T, f_1(t,T),f_2(t,T))\right\}^2 f_2^2(t,T)c_2^2\,dt\right]+\\
&\qquad+\rho\E\left[\int_0^T\left\{\phi_1(t)\e^{-rt}-\frac{\partial \widehat{C}}{\partial f_1}(t,T,f_1(t,T),f_2(t,T))\right\} \right. \\
&\qquad\qquad\left.\times \left\{\phi_2(t)\e^{-rt}-\frac{\partial \widehat{C}}{\partial f_2}(t,T, f_1(t,T),f_2(t,T))\right\} f_1(t,T)f_2(t,T)c_1c_2\,dt \right]\\
&\qquad+\E\left[\int_0^T  \int_{\mathbb{R}^2}\left\{\phi_1(t)\e^{-rt}f_1(t,T)(\e^{z_1g(t,T)}-1)+\phi_2(t)\e^{-rt}f_2(t,T)(\e^{z_2h(t,T)}-1) \right.\right.\\
&\qquad\qquad\left.\left.-\left(\widehat{C}(t,T,f_1(t,T)(1+z_1),f_2(t,T)(1+z_2))-\widehat{C}(t,T,f_1(t,T),f_2(t,T))\right)\right.\right. \\
&\qquad\qquad\left.\left.+\sum_{i=1}^{2} z_i \frac{\partial \widehat{C}}{\partial f_i}
(t, f_1(t,T),f_2(t,T))f_i(t,T)\right\}^2 \,\ell(dz_1,dz_2)\,dt\right]\,.
\end{align*}
To find the optimal hedges, we derive the functional differentials of
the above expression with respect to $\phi_1$ and $\phi_2$ and equate this with
zero, yielding first-order conditions for a minimum:
\begin{align*}
0&=\left\{\phi_1(t)\e^{-rt}-\frac{\partial \widehat{C}}{\partial f_1}(t,T, f_1(t,T),f_2(t,T))\right\}f_1^2(t,T) c_1^2\\
&\qquad+\frac12\rho\left\{\phi_2(t)\e^{-rt}-\frac{\partial \widehat{C}}{\partial f_2}
(t,T,f_1(t,T),f_2(t,T))\right\}c_1 c_2f_1(t,T)f_2(t,T)\\
&\qquad+\int_{\mathbb{R}^2}\left\{\phi_1(t)\e^{-rt}f_1(t,T)(\e^{z_1g(t,T)}-1)+  
\phi_2(t)\e^{-rt}f_2(t,T)(\e^{z_2h(t,T)}-1) \right. \\
&\qquad\qquad\left.-\left(\widehat{C}(t,f_1(t,T)(1+z_1),f_2(t,T)(1+z_2))-
\widehat{C}(t,T,f_1(t,T),f_2(t,T))\right) \right. \\ 
&\qquad\qquad\left.+\sum_{i=1}^{2} z_i \frac{\partial \widehat{C}}{\partial f_i}(t,T, f_1(t,T),f_2(t,T))\right\}f_1(t,T)(e^{z_1g(t,T)}-1)\,\ell(dz_1,dz_2)\,,
\end{align*}
and
\begin{align*}
0&=\left\{\phi_2(t)\e^{-rt}-\frac{\partial \widehat{C}}{\partial f_2}(t,T, f_1(t,T),f_2(t,T))\right\}f_2^2(t,T) c_2^2\\
&\qquad+\frac12\rho\left\{\phi_1(t)\e^{-rt}-\frac{\partial \widehat{C}}{\partial f_1}
(t,T,f_1(t,T),f_2(t,T))\right\}c_1 c_2f_1(t,T)f_2(t,T)\\
&\qquad+\int_{\mathbb{R}^2}\left\{\phi_1(t)\e^{-rt}f_1(t,T)(\e^{z_1g(t,T)}-1)+  
\phi_2(t)\e^{-rt}f_2(t,T)(\e^{z_2h(t,T)}-1) \right. \\
&\qquad\qquad\left.-\left(\widehat{C}(t,f_1(t,T)(1+z_1),f_2(t,T)(1+z_2))-
\widehat{C}(t,T,f_1(t,T),f_2(t,T))\right) \right. \\ 
&\qquad\qquad\left.+\sum_{i=1}^{2} z_i \frac{\partial \widehat{C}}{\partial f_i}(t,T, f_1(t,T),f_2(t,T))\right\}f_2(t,T)(e^{z_2h(t,T)}-1)\,\ell(dz_1,dz_2)\,.
\end{align*}
But this leads to a linear system of two equations in $\phi_1$ and $\phi_2$, as 
described in the Proposition. Hence, the proof is complete.
%
\end{proof}
Note that $\mathbb{A}(t)\mathbf{\phi}(t)=\mathbf{b}(t)$ has a solution if and only if the matrix $\mathbb{A}(t)$ is invertible. We have that the determinant of $\mathbb{A}(t)$ is
\begin{align*}
\text{det}(\mathbb{A}(t))&=\e^{-2rt}f_1^2(t,T)f_2^2(t,T)\\
&\times\Bigg[\left\{c_1^2+\int_{\mathbb{R}^2}(\e^{z_1g(t,T)}-1)^2\,\ell(dz_1,dz_2)\right\}\left\{c_2^2+\int_{\mathbb{R}^2}(\e^{z_2h(t,T)}-1)^2\,\ell(dz_1,dz_2)\right\}\\
&-\left\{\frac12\rho c_1c_2+\int_{\mathbb{R}^2}(\e^{z_1g(t,T)}-1)(\e^{z_2h(t,T)}-1)\,\ell(dz_1,dz_2)\right\}^2\Bigg]
\end{align*}
Hence, if this is different that zero, we find a unique solution.

Let us consider the simple case of a bivariate Brownian motion $L=(B,W)$. In this case the matrix $\mathbb{A}(t)$ and the vector $\mathbf{b}(t)$ have significantly simpler forms and reduce to
\begin{align*}
a_{11}(t)&=\e^{-rt}f_1^2(t,T)c_1^2 \\
a_{12}(t)&=a_{21}(t)=\e^{-rt}f_1(t,T)f_2(t,T)\frac12\rho c_1c_2 \\
a_{22}(t)&=\e^{-rt}f_2^2(t,T)c_2^2\,.
\end{align*}
and
\begin{align*}
b_1(t)&=\frac{\partial\widehat{C}}{\partial f_1}(t,T,f_1(t,T),f_2(t,T))f_1^2(t,T)c_1^2 \\
&\qquad+\frac12\rho c_1c_2 \frac{\partial\widehat{C}}{\partial f_2}(t,T,f_1(t,T),f_2(t,T))f_1(t,T)f_2(t,T) \\
b_2(t)&=\frac{\partial\widehat{C}}{\partial f_2}(t,T,f_1(t,T),f_2(t,T))f_2^2(t,T)c_2^2 \\
&\qquad+\frac12\rho c_1c_2 \frac{\partial\widehat{C}}{\partial f_1}(t,T,f_1(t,T),f_2(t,T))f_1(t,T)f_2(t,T)\,.
\end{align*}
Because the determinant of  $\mathbb{A}(t)$ in this case becomes
$$
\text{det}(\mathbb{A}(t))=\e^{-2rt}f_1^2(t,T)f_2^2(t,T)c_1^2c_2^2(1-\frac14\rho^2),
$$
the unique solution of $\mathbb{A}(t)\mathbf{\phi}(t)=\mathbf{b}(t)$ always exists. One can easily compute the hedge by finding the 
inverse of $\mathbb{A}(t)$.



\begin{thebibliography}{99}

\bibitem{AN} Alos, E., Mazet, O., and Nualart, D. (2001).  Stochastic calculus with 
respect to Gaussian processes. {\it Ann. Probab.},  {\bf 29}(2), pp.~766--801.

\bibitem{BNS} Barndorff-Nielsen, O. E., and Shephard, N. (2001). Non-Gaussian
Ornstein-Uhlenbeck-based models and some of their uses in economics. 
{\it J. R. Statist. Soc. B}, {\bf 63}(2), pp.~167--241 (with discussion).

\bibitem{BNBVspot} Barndorff-Nielsen, O. E., Benth, F. E., and Veraart, A. (2013). Modelling energy spot prices by volatility modulated L\'evy-driven Volterra
processes. {\it Bernoulli}, {\bf 19}(3), pp.~803--845.

\bibitem{B} Benth, F. E. (2011). The stochastic volatility model of Barndorff-Nielsen and Shephard in commodity markets. {\it Math. Finance}, {\bf 21}(4), pp.~595--625.

\bibitem{BKMV} Benth, F. E., Kl\"uppelberg, C., M\"uller, G., and Vos, L. (2014).
Futures pricing in electricity markets based on stable CARMA spot 
models. {\it Energy Econ.}, {\bf 44}, pp.~392--406.

\bibitem{BSBK-book} Benth, F. E., \v{S}altyt\.{e} Benth, J., and Koekebakker, S.
(2008). {\it Stochastic Modelling of Electricity and Related Markets}. 
World Scientific, Singapore.

\bibitem{BK} Bernhardt, C., Kl\"uppelberg, C. and Meyer-Brandis, T. (2008). Estimating high quantiles for electricity prices by stable linear models. {\it J. Energy Markets},
 {\bf 1}(1), pp.~3--19.

\bibitem{CD} Carmona, R., and Durrleman, V. (2003). Pricing and hedging spread
options. {\it SIAM Reviews}, {\bf 45}, pp.~627--685.

\bibitem{CM} Carr, P., and Madan, D. B. (1998). Option valuation using fast Fourier
transform. {\it J. Comp. Finance}, {\bf 2}, pp.~61--73.

\bibitem{CT} Cont, R., and Tankov, P. (2004). {\it Financial Modeling with
Jump Processes}. Chapman \& Hall/CRC, Boca Raton. 

\bibitem{Duffie} Duffie, D. (1992). {\it Dynamic Asset Pricing Theory}. Princeton
University Press, Princeton NJ. 

\bibitem{EGP} Eberlein, E., Glau, K., and Papapantoleon, A. (2010). Analysis of Fourier transform valuation formulas and 
applications. {\it Appl. Math. Finance},
{\bf 17}(3), pp.~211--240.

\bibitem{EPS2} Eberlein, E.,  Papapantoleon, A., and Shiryaev, A. N. (2008). On the duality principle in option pricing: semimartingale setting. {\it Finance Stoch.},
{\bf 12}, pp.~265--292.

\bibitem{EPS} Eberlein, E.,  Papapantoleon, A., and Shiryaev, A. N. (2009). Esscher transform and the duality principle for multidimensional semimartingales. {\it Ann. Appl. Prob.},
{\bf 19}(5), pp.~1944--1971.

\bibitem{EW} Eydeland, A., and Wolynieck, K. (2003). {\it Energy and Power Risk
Management}. Wiley-Finance.

\bibitem{F} Folland, G. B. (1984). {\it Real Analysis -- Modern Techniques and their
Applications}. John Wiley \& Sons, Chichester. 

\bibitem{G} Geman, H. (2005). {\it Commodities and Commodity Derivatives}.
Wiley-Finance.

\bibitem{KS} Karatzas, I., and Shreve, S. (1991). {\it Brownian Motion and Stochastic 
Calculus}, Springer-Verlag, Heidelberg.

\bibitem{M} Margrabe, W. (1978). The value of an option to exchange one asset
for another. {\it J. Finance}, {\bf 33}, pp.~177--186.

\bibitem{OS} \O ksendal, B., and Sulem, A. (2009). {\it Applied Stochastic Control of Jump Diffusions}, Springer-Verlag, Berlin Heidelberg.


\end{thebibliography}
\end{document}